\newcommand{\realnumber}{\mathbb{R}}
\theoremstyle{definition} \newtheorem{defi}{Definition}
\theoremstyle{definition} 
\theoremstyle{definition} 
\theoremstyle{definition} \newtheorem{lemma}{Lemma}
\newtheorem{theorem}{Theorem}
\newcommand{\Aut}{\mathit{Aut}}
\newcommand{\GA}{\mathsf{GA}}
\newcommand{\GL}{\mathsf{GL}}
\newcommand{\LTA}{\mathsf{LTA}}
\newcommand{\BLTA}{\mathsf{BLTA}}
\newcommand{\EC}{\mathsf{EC}}
\newcommand{\UTL}{\mathsf{UTL}}
\newcommand{\PL}{\mathsf{PL}}
\DeclareMathOperator{\scdec}{SC}
\DeclareMathOperator{\ascdec}{aSC}
\DeclareMathOperator{\dec}{dec}
\DeclareMathOperator{\adec}{adec}
\newcommand{\pid}{\mathbbm{1}}
\newcommand{\fixme}[2]{\ifx&#2&{\leavevmode\color{red}#1}\else{\leavevmode\color{red}FIXME\{}#1{\leavevmode\color{red}\}}\footnote{{\leavevmode\color{red}#2}}\PackageWarning{Fixme}{#1: #2}\fi}
\newcolumntype{M}[1]{>{\centering\arraybackslash}m{#1}}
\newcommand*{\PROOFS}{}%
\begin{document}

\title{Classification of Automorphisms \\ for the Decoding of Polar Codes}

\author{\IEEEauthorblockN{Charles Pillet, Valerio Bioglio, Ingmar Land}
\IEEEauthorblockA{Mathematical and Algorithmic Sciences Lab\\ Paris Research Center, Huawei Technologies France SASU \\
Email: $\{$charles.pillet1,valerio.bioglio,ingmar.land$\}$@huawei.com}} 

\maketitle

\begin{abstract}
This paper proposes new polar code design principles for the low-latency automorphism ensemble (AE) decoding. 
Our proposal permits to design a polar code with the desired automorphism group (if possible) while assuring the decreasing monomial property. 
Moreover, we prove that some automorphisms are redundant under AE decoding, and we propose a new automorphisms classification based on equivalence classes.
Finally, we propose an automorphism selection heuristic based on drawing only one element of each class; we show that this method enhances the block error rate (BLER) performance of short polar codes even with a limited number of automorphisms.
\end{abstract}

\begin{IEEEkeywords}
Polar codes, code automorphism, successive cancellation, list decoding, permutation decoding, code design.
\end{IEEEkeywords}

\section{Introduction}\label{sec:intro}

Polar codes \cite{ArikanFirst} are a class of linear block codes relying on the phenomenon of channel polarization.  
They are shown to be capacity-achieving on binary memoryless symmetric channels under successive cancellation (SC) decoding for infinite block length.  
However, in the finite-length regime, SC decoding is far from maximum likelihood (ML) decoding.  
SC list (SCL) decoding has been proposed in \cite{TalSCL} to overcome this problem. 
Since the correct candidate codeword may not be chosen due to the minimizing metric, a cyclic redundancy check (CRC) code is concatenated to the code, acting as a genie that selects the correct codeword regardless the metric. 
This scheme, referred as CRC-aided SCL (CA-SCL), is now the state-of-the-art decoding algorithm for polar code. 

The main drawback of SCL algorithm is its decoding latency due to the information exchange performed at each bit-decoding step among parallel SC decoders. 
In order to avoid this decoding delay, permutation-based decoding for SC was proposed in \cite{PermGross}; a similar approach was proposed in \cite{BPLRM} for belief propagation (BP) and in \cite{SCANL} for soft cancellation (SCAN). 
According to this framework, $M$ instances of the same decoder are ran in parallel on permuted factor graphs of the code; however, the performance gain is poor since each factor graph permutation (FGP) alters the bits polarization, and hence the frozen set of the code. 
As a consequence, research towards permutations not altering the frozen set were carried out \cite{PermDecRussian}; such permutations are called \emph{automorphisms} and form the group of permutations mapping a codeword into another codeword.
The automorphism group of binary Reed-Muller code is known to be the general affine group \cite{WilliamsSloane} and automorphisms were used as permutations with BP as inner decoder in \cite{geiselhart2020automorphism}. 
Due to their close relation with polar codes SC was also used; this new decoding approach is referred to as \emph{automorphism ensemble} (AE) decoding.

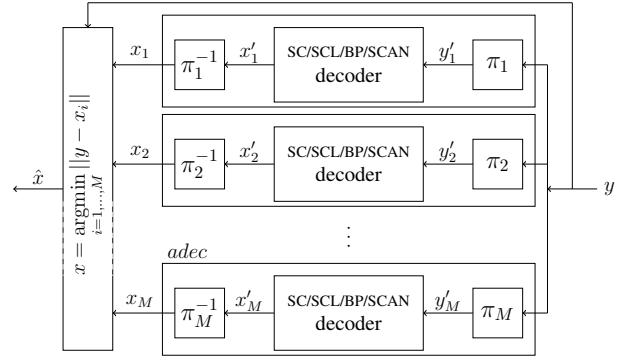
\begin{figure}[t!]
	\centering
	\resizebox{0.45\textwidth}{!}{\begin{tikzpicture}

\draw[ ] (8.75,0) node [font=\large] {$y$};
\draw[<-]  (7.5,0)  -- (8.5,0) ;
\draw  (-1.75,3.75)  -- (8,3.75) ;
\draw  (8,0)  -- (8,3.75) ;
\draw[->]  (-1.75,3.75)  -- (-1.75,3.25) ;
\draw  (7.5,-2.5)  -- (7.5,2.5) ;
\draw[<-]  (7,2.5)  -- (7.5,2.5) ;
\draw[<-]  (7,0.5)  -- (7.5,0.5) ;
\draw[<-]  (7,-2.5)  -- (7.5,-2.5) ;
\draw[<-]  (5,2.5)  -- (6,2.5) ;
\draw[ ] (5.5,2.75) node [font=\large] {$y'_1$};
\draw[<-]  (5,0.5)  -- (6,0.5) ;
\draw[ ] (5.5,0.75) node [font=\large] {$y'_2$};
\draw[<-]  (5,-2.5)  -- (6,-2.5) ;
\draw[ ] (5.5,-2.25) node [font=\large] {$y'_M$};
\draw[<-]  (1,2.5)  -- (2,2.5) ;
\draw[ ] (1.5,2.75) node [font=\large] {$x'_1$};
\draw[<-]  (1,0.5)  -- (2,0.5) ;
\draw[ ] (1.5,0.75) node [font=\large] {$x'_2$};
\draw[<-]  (1,-2.5)  -- (2,-2.5) ;
\draw[ ] (1.5,-2.25) node [font=\large] {$x'_M$};
\draw[<-]  (-1.25,2.5)  -- (0,2.5) ;
\draw[ ] (-0.7,2.75) node [font=\large] {$x_1$};
\draw[<-]  (-1.25,0.5)  -- (0,0.5) ;
\draw[ ] (-0.7,0.75) node [font=\large] {$x_2$};
\draw[<-]  (-1.25,-2.5)  -- (0,-2.5) ;
\draw[ ] (-0.7,-2.25) node [font=\large] {$x_M$};
\draw[<-]  (-3.25,-0)  -- (-2.25,-0) ;
\draw[ ] (-2.75,0.25) node [font=\large] {$\hat{x}$};

\draw[fill=white]  (0,3) rectangle (1,2);
\draw[ ] (0.5,2.5) node [font=\Large] {$\pi_1^{-1}$};
\draw[fill=white]  (0,1) rectangle (1,0);
\draw[ ] (0.5,0.5) node [font=\Large] {$\pi_2^{-1}$};
\draw[fill=white]  (0,-2) rectangle (1,-3);
\draw[ ] (0.5,-2.5) node [font=\Large] {$\pi_M^{-1}$};

\draw[fill=white]  (2,3.25) rectangle (5,1.75);
\draw[ ] (3.5,2.5) node [font=\large] {$\begin{array}{c} \text{\small SC/SCL/BP/SCAN} \\ \text{decoder} \end{array}$};
\draw[fill=white]  (2,1.25) rectangle (5,-0.25);
\draw[ ] (3.5,0.5) node [font=\large] {$\begin{array}{c} \text{\small SC/SCL/BP/SCAN} \\ \text{decoder} \end{array}$};
\path (3.5,1) -- (3.5,-3) node [font=\large, midway, sloped] {$\dots$};
\draw[fill=white]  (2,-1.75) rectangle (5,-3.25);
\draw[ ] (3.5,-2.5) node [font=\large] {$\begin{array}{c} \text{\small SC/SCL/BP/SCAN} \\ \text{decoder} \end{array}$};

\draw[fill=white]  (6,3) rectangle (7,2);
\draw[ ] (6.5,2.5) node [font=\Large] {$\pi_1$};
\draw[fill=white]  (6,1) rectangle (7,0);
\draw[ ] (6.5,0.5) node [font=\Large] {$\pi_2$};
\draw[fill=white]  (6,-2) rectangle (7,-3);
\draw[ ] (6.5,-2.5) node [font=\Large] {$\pi_M$};

%

\draw[fill=white]  (-2.25,3.25) rectangle (-1.25,-3.25);
\draw[dashed, white]  (-2.25,-0.25) edge (-2.25,-1.75);
\draw[dashed, white]  (-1.25,-0.25) edge (-1.25,-1.75);
\draw[ ] (-1.75,0.1) node [rotate=90,font=\large] {$x=\underset{i=1,\ldots, M}{\mathrm{argmin}}\,||y-x_i||$};

\draw  (-0.25,3.5) rectangle (7.25,1.65);
\draw  (-0.25,1.5) rectangle (7.25,-.35);
\draw  (-0.25,-1.5) rectangle (7.25,-3.35);
\draw[ ] (0.25,-1.25) node [font=\large] {$adec$};

\end{tikzpicture}}
	\caption{Structure of the automorphism ensemble (AE) decoder.}
	\label{fig:AE}
\end{figure} 

Lower-triangular affine (LTA) transformations form a subgroup of the automorphism group of polar codes \cite{BardetPolyPC}. 
However, these transformations commute with SC decoding \cite{geiselhart2020automorphism}, leading to no gain under AE decoding.
In \cite{geiselhart2021automorphismPC,li2021complete}, the block-lower-triangular affine (BLTA) group was proved to be the complete affine automorphism group of a polar code. 
In \cite{geiselhart2021automorphismPC,PC_UTL_design}, BLTA transformations were successfully applied to AE decoding of polar codes.
Finally, the author in \cite{Perm_Russian_polar_subcode} uses polar subcodes in conjunction with a costly choice of permutation set to design good polar codes for AE decoding. 

In this paper, we propose a method to design polar codes having a desired affine automorphism group.  
We prove that LTA is not always the complete SC absorption group, namely that a larger set of automorphisms may be absorbed under SC decoding. 
Moreover, we introduce the concept of \emph{redundant} automorphisms, providing the maximum number of permutations providing possibly different codeword candidates under AE-SC. 
Finally, we present a small automorphism set design for AE-SC decoding avoiding redundancy and exhibiting good performance compared to ML and CA-SCL decoding.
\ifdefined\PROOFS
	This extended version of the paper includes the proofs of the lemmas that are omitted in the conference version of the paper due to space limitations.
\else
	The proofs of the lemmas are based on the application of group theory properties and are omitted for space limitation; they will be included in an extended version of the paper. 
\fi

\section{Preliminaries}\label{sec:pre}
\subsection{Polar codes}
An $(N,K)$ polar code of length $N=2^n$ and dimension $K$ is a binary block code defined by the kernel matrix $T_2\triangleq [ \begin{smallmatrix} 1 & 0\\ 1 & 1 \end{smallmatrix} ]$, the transformation matrix $T_N = T_2^{\otimes n}$, an information set $\mathcal{I} \in [N]$ and a frozen set $\mathcal{F} = [N] \backslash \mathcal{I}$, $[N] = \{0,1,\ldots,N-1\}$.  
For encoding, an input vector $u = ( u_0,u_1,\dots, u_{N-1} )$ is generated by assigning $u_i = 0$ for $i \in \mathcal{F}$ (frozen bits), and storing information in the remaining entries. 
The codeword is then computed as $x = u \cdot T_N$.
The information set is commonly selected according to the reliabilities of the virtual bit-channels resulting from the polarization, which can be determined through different methods \cite{frozenset}. 

SC decoding is the fundamental decoding algorithm for polar codes and is proved to be capacity-achieving at infinite block length \cite{ArikanFirst}. 
BP decoding is a popular message passing decoder conceived for codes defined on graphs, and can be easily adapted to polar codes. 
SCAN \cite{SCANL} is an iterative SISO decoder using the SC schedule with the BP update rules. 
The three decoders were used in permutation decoding\cite{PermGross,BPLRM,SCANL} to enhance performance without increasing latency.

\subsection{Monomial codes}\label{subsec:monomial}
Monomial codes of length $N=2^n$ are a family of codes that can be obtained as evaluations of monomials in $n$ binary variables. 
Polar and Reed-Muller codes can be described through this formalism \cite{WilliamsSloane}; in fact, the rows of $T_N$ represent all possible evaluations of monomials over $\mathbb{F}_2^n$ \cite{BardetPolyPC}.
A monomial code of length $N$ and dimension $K$ is generated by $K$ monomials out of the $N$ monomials over $\mathbb{F}_2^n$. 
These $K$ chosen monomials form the \emph{generating monomial set} $\mathcal{G}$ of the code, while the linear combinations of their evaluations provide the codebook of the code. 
A monomial code is called \emph{decreasing} if $\mathcal{G}$ includes all factors and antecedents of every monomial in the set \cite{BardetPolyPC}. 
In this case, $\mathcal{G}$ can be also described with the minimal information set $\mathcal{I}_{min}$ containing the few monomials necessary to retrieve all the others. 
This structure is equivalent to the partially-symmetric monomial code construction \cite{sym1}. 

Reed-Muller codes are monomial codes generated by all monomials up to a certain degree. 
Polar codes select generating monomials following the polarization effect; if the polar code design is compliant with the universal partial order (UPO) framework, the resulting code is provably decreasing monomial \cite{BardetPolyPC}. 
In the following, we assume that the polar code is a decreasing monomial code. 

\subsection{Affine automorphism subgroups}\label{subsec:AUT_GA}
An automorphism $\pi$ of a code $\mathcal{C}$ is a permutation of $N$ elements mapping every codeword $x \in \mathcal{C}$ into another codeword $\pi(x) \in \mathcal{C}$. 
The automorphism group $\Aut(\mathcal{C})$ of a code $\mathcal{C}$ is the set containing all automorphisms of the code. 
For monomial codes, the \emph{affine automorphism group} $\mathcal{A} \subseteq \Aut(\mathcal{C})$, formed by the automorphism that can be written as affine transformations of $n$ variables, is of particular interest. 
An affine transformation of $n$ variables is described by equation 
\begin{align}
	z & \mapsto z' = A z + b ,
	\label{eq:GA}
\end{align}
$z , z' \in \mathbb{F}_2^n$, where the transformation matrix $A$ is an $n\times n$ binary invertible matrix and $b$ is a binary column vector of length $n$. 
The variables in \eqref{eq:GA} are the binary representations of code bit indices, and thus affine transformations represent code bit permutations.
The automorphism group of Reed-Muller codes is known to be the complete affine group GA$(n)$ \cite{WilliamsSloane}, while the affine automorphism group of polar codes has been recently proved to be the block-lower-triangular affine (BLTA) group \cite{geiselhart2021automorphismPC,li2021complete}, namely the group of affine transformations having BLT transformation matrix. 
The BLTA group is defined by the block structure $S=(s_1,\ldots,s_t)$, $s_1+\ldots+s_t=n$ of the admissible BLT transformation matrix, representing the sizes of the blocks alongside the diagonal. 
The last block size $s_t$ represents the symmetry of the code \cite{sym1}; this value is $n$ for Reed-Muller codes and usually 1 for polar codes.
Finally, we denote by $\mathcal{L}$, $\mathcal{U}$ and $\mathcal{P}$, the sets of lower-triangular, upper-triangular and permutation matrices of $n$ elements. 
If $A$ in \eqref{eq:GA} is replaced by $L \in \mathcal{L}$ or $U \in \mathcal{U}$, the transformation is called lower-triangular-affine (LTA) and upper-triangular-affine (UTA) respectively. 
If $b$ is removed, the transformation is \emph{linear} and corresponds to LTL, UTL and PL transformations with respectively $L$, $U$, and $P\in\mathcal{P}$ replacing $A$. 

\subsection{Automorphism Ensemble (AE) decoder}\label{subsec:AEdecoder}
An \emph{automorphism decoder} is a decoder run on a received signal that is scrambled according to a code automorphism; the result is then scrambled back to retrieve the original codeword estimation. 
More formally, given a decoder $\dec$ for a code $\mathcal{C}$, the corresponding automorphism decoder $\adec$ is given by
\begin{equation}
\label{eq:adec}
\adec(y,\pi) = \pi^{-1}\left( \dec(\pi(y)) \right),
\end{equation}
where $y$ is the received signal and $\pi \in \Aut(\mathcal{C})$. 
An \emph{automorphism ensemble} (AE) decoder, originally proposed in \cite{geiselhart2020automorphism} for Reed-Muller codes, consists of $M$ automorphism decoders running in parallel, as depicted in Figure~\ref{fig:AE}, where the codeword candidate is selected using a least-squares metric. 

AE decoding with SC component decoders can be used for polar codes, however with particular attention on the choice of the automorphisms: 
in fact, $\LTA$ automorphisms are \emph{absorbed} by SC decoding, namely automorphism SC (aSC) decoding where $\pi$ is $\LTA$ provides the same result as plain SC decoding \cite{geiselhart2020automorphism}, i.e. $\ascdec(y,\pi) = \scdec(y)$.
However, BLTA automorphisms are generally not absorbed by SC decoding \cite{geiselhart2020automorphism}, hence they can be successfully used in AE-SC decoders instead \cite{geiselhart2021automorphismPC,PC_UTL_design}. 



\section{Equivalence classes of automorphisms}\label{sec:EC_AE}
In this section, we show how to classify automorphisms into equivalent classes (EC) containing permutations providing the same results under AE-SC decoding. 
The number of ECs, corresponding to the maximum number of non-redundant automorphisms, will be investigated given a certain $\BLTA$ block structure $S$. 
Finally, we use this classification to provide an automorphism set design ensuring no redundancy.


\subsection{Decoder equivalence}\label{subsec:EC}
To begin with, we introduce the notion of decoder equivalence. 
This concept is used to cluster the automorphisms in sets always providing the same results under AE decoding. 
\begin{defi}[Decoder equivalence]
\label{def:dec-equiv}
Two automorphisms $\pi_1,\pi_2 \in \mathcal{A}$ are equivalent with respect to a decoding algorithm $\dec$, written as $\pi_1 \sim \pi_2$, if for all $y \in \realnumber^N$ 
\begin{equation}
	\adec(y;\pi_1) = \adec(y;\pi_2). 
\end{equation}
\end{defi}	
This is an equivalence relation, since it is reflexive, symmetric and transitive. 
The equivalence classes are defined as
\begin{equation}
	[\pi] \triangleq \{ \pi' \in \mathcal{A} : \pi \sim \pi' \}.
\end{equation}
The equivalence class $[\mathbbm{1}]$ of the identity automorphism $\mathbbm{1}$ corresponds to the set of automorphisms absorbed by $\dec$. 
\begin{lemma}\label{lem:inv}
If $\pi \in [\pid]$, then $\pi^{-1} \in [\pid]$.
\ifdefined\PROOFS
	\begin{proof}
	Assume $\pi \in [\pid]$. Then from the definition of the EC, we have that
	\begin{equation*}
	    \dec(y) = \pi^{-1}( \dec( \pi( y ) ) )  
	    \quad\Leftrightarrow\quad  
	    \pi(\dec(y)) = \dec( \pi( y ) ) 
	\end{equation*}
	(with the latter denoting an equivariance); and so
	\begin{equation*}
		\resizebox{.99\columnwidth}{!}{
	    $\adec(y;\pi^{-1}) = \pi( \dec( \pi^{-1}( y ) ) ) = \dec( \pi( \pi^{-1} (y) ) ) = \dec(y).$
	    }
	\end{equation*}
	\end{proof}
\fi
\end{lemma}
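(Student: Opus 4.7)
The plan is to unfold the definition of $[\pid]$ in terms of the automorphism decoder formula \eqref{eq:adec} and rewrite the resulting absorption identity as an equivariance, then apply that equivariance with a shifted argument to get the statement for $\pi^{-1}$.

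First I would expand $\pi \in [\pid]$ using $\adec(y;\pi)=\pi^{-1}(\dec(\pi(y)))$ and $\adec(y;\pid)=\dec(y)$; the membership condition becomes $\pi^{-1}(\dec(\pi(y))) = \dec(y)$ for every $y \in \realnumber^N$. Applying $\pi$ to both sides rewrites this as the equivariance relation $\dec(\pi(y)) = \pi(\dec(y))$, valid for all $y$.

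Next I would compute $\adec(y;\pi^{-1})$ directly from \eqref{eq:adec}: since $(\pi^{-1})^{-1}=\pi$, we have $\adec(y;\pi^{-1}) = \pi(\dec(\pi^{-1}(y)))$. Substituting $y'=\pi^{-1}(y)$ into the equivariance identity obtained above gives $\pi(\dec(y')) = \dec(\pi(y')) = \dec(y)$. Hence $\adec(y;\pi^{-1}) = \dec(y) = \adec(y;\pid)$ for all $y$, so $\pi^{-1} \sim \pid$ and thus $\pi^{-1}\in[\pid]$.

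I expect no real obstacle: the argument is a two-line symmetry once one notices that the absorption condition for $\pi$ is equivalent to the equivariance $\dec\circ\pi = \pi\circ\dec$, which is manifestly symmetric between $\pi$ and $\pi^{-1}$. The only subtle point is to justify the passage between the inverse form $\pi^{-1}(\dec(\pi(y)))=\dec(y)$ and the equivariant form $\dec(\pi(y))=\pi(\dec(y))$, which is an immediate consequence of $\pi$ being a bijection on $\realnumber^N$ (here $\pi$ acts by permuting coordinates, so it is invertible on the decoder's input/output space).
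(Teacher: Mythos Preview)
Your proof is correct and follows essentially the same approach as the paper: both rewrite the absorption condition $\dec(y)=\pi^{-1}(\dec(\pi(y)))$ as the equivariance $\dec(\pi(y))=\pi(\dec(y))$, and then apply it at $\pi^{-1}(y)$ to conclude $\adec(y;\pi^{-1})=\dec(y)$. Your version is slightly more explicit about the substitution and the bijectivity of $\pi$, but the argument is identical.
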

\begin{lemma}\label{lem:absorbed-group}
The equivalence class $[\pid]$ (set of decoder-absorbed automorphisms) is a subgroup of $\mathcal{A}$, i.e., $[\pid] \leq \mathcal{A}$.
\ifdefined\PROOFS
	\begin{proof}
	We prove this lemma using the subgroup test, stating that $[\pid] \leq \mathcal{A}$ if and only if $\forall \pi,\sigma \in [\pid]$ then $\pi^{-1}\sigma \in [\pid]$:
	\begin{align}
	    \adec(y;\pi^{-1}\sigma) &= (\pi^{-1}\sigma)^{-1}( \dec( (\pi^{-1}\sigma)( y ) ) ) =\\
	    &= \sigma^{-1}\left(\pi\left( \dec\left( \pi^{-1}\left(\sigma\left( y \right)\right)\right)\right)\right) =\\
	    &= \sigma^{-1}\left( \dec\left(\sigma\left( y \right)\right)\right) =\\
	    &= \dec\left( y \right).
	\end{align}
	\end{proof}
\fi
\end{lemma}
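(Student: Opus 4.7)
The plan is to apply the one-step subgroup test: it suffices to show that $[\pid]$ is nonempty and that $\pi^{-1}\sigma \in [\pid]$ whenever $\pi,\sigma \in [\pid]$. Nonemptiness is immediate since $\pid \in [\pid]$ by reflexivity of $\sim$.

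The main tool I would extract first is the equivariance reformulation already used in Lemma~\ref{lem:inv}: an automorphism $\pi$ lies in $[\pid]$ iff $\dec(\pi(y)) = \pi(\dec(y))$ for all received vectors $y$. This is just a rearrangement of $\dec(y) = \pi^{-1}(\dec(\pi(y)))$, but casting absorption as a commutation relation $\dec \circ \pi = \pi \circ \dec$ is what makes the group manipulation transparent.

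With that in hand, I would compute $\adec(y;\pi^{-1}\sigma)$ directly. By definition,
\begin{equation*}
\adec(y;\pi^{-1}\sigma) = (\pi^{-1}\sigma)^{-1}\bigl(\dec((\pi^{-1}\sigma)(y))\bigr) = \sigma^{-1}\bigl(\pi(\dec(\pi^{-1}(\sigma(y))))\bigr).
\end{equation*}
Since $\pi \in [\pid]$, the equivariance relation lets me push $\pi$ past $\dec$, turning $\pi \circ \dec \circ \pi^{-1}$ into $\dec$. What remains is $\sigma^{-1}(\dec(\sigma(y)))$, which equals $\dec(y)$ because $\sigma \in [\pid]$. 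Hence $\pi^{-1}\sigma \in [\pid]$, and the subgroup test is satisfied.

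The only genuinely non-routine step is the first one: realising that the decoder-absorption condition is equivalent to an equivariance (and thus to a commuting square $\dec \circ \pi = \pi \circ \dec$). Once that reformulation is in place the rest is a short algebraic chase using the associativity of permutation composition; no further properties of $\mathcal{A}$, of the code, or of $\dec$ are needed. As a sanity check I would also note that closure under inversion is exactly Lemma~\ref{lem:inv}, so the argument could alternatively be split into the two standard axioms (closure under product and under inverse) if a more pedagogical presentation were preferred.
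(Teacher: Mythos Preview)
Your proof is correct and follows essentially the same route as the paper: both apply the one-step subgroup test and carry out the identical chain $\adec(y;\pi^{-1}\sigma)=\sigma^{-1}(\pi(\dec(\pi^{-1}(\sigma(y)))))=\sigma^{-1}(\dec(\sigma(y)))=\dec(y)$, using the equivariance reformulation of absorption (from Lemma~\ref{lem:inv}) to collapse the inner $\pi\circ\dec\circ\pi^{-1}$. Your version is slightly more explicit about nonemptiness and about isolating the equivariance step, but the argument is the same.
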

According to our notation, two automorphisms in the same EC always provide the same candidate under $\adec$ decoding. 
The number of non-redundant automorphisms for AE-$\dec$, namely the maximum number of different candidates listed by an AE-$\dec$ decoder, is then given by the number of equivalent classes of our relation.
\begin{lemma}\label{lem:nb_ec}
There are $|\EC|=\frac{|\mathcal{A}|}{|[\pid]|}$ equivalence classes $[\pi]$, $\pi \in \mathcal{A}$, all having the same size.
\ifdefined\PROOFS
	\begin{proof}
	Right cosets of $[\pid]$ are defined as
	\begin{equation}
		[\pid]\sigma \triangleq \{ \pi \circ \sigma : \pi \in [\pid] \}.
	\end{equation}
	Hence, permutations $\sigma_1,\sigma_2 \in [\pid]\sigma$ if and only if there exist two permutations $\pi_1,\pi_2 \in [\pid]$ such that $\sigma_1 = \pi_1 \circ \sigma$ and $\sigma_2 = \pi_2 \circ \sigma$, where the second implies $\sigma = \pi_2^{-1} \circ \sigma_2$. 
	Then 
	\begin{align*}
	\adec(y;\sigma_1) &= \adec(y;\pi_1\sigma) = \\
	&= \adec(y;\pi_1\pi_2^{-1}\sigma_2) = \\
	&= \adec(y;\sigma_2) ,
	\end{align*}
	as  $\pi_1 \pi_2^{-1} \in [\pid]$ by Lemma~\ref{lem:absorbed-group}.
	The proof is concluded by applying Lagrange’s Theorem. 
	\end{proof}
\fi
\end{lemma}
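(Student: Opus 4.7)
The plan is to identify the equivalence classes of decoder-equivalence with the right cosets of the subgroup $[\pid]\leq\mathcal{A}$ established in Lemma~\ref{lem:absorbed-group}, and then invoke Lagrange's theorem on the finite group $\mathcal{A}$.

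First I would show that any right coset $[\pid]\sigma$ sits inside a single equivalence class. Given $\sigma_1=\pi_1\sigma$ and $\sigma_2=\pi_2\sigma$ with $\pi_1,\pi_2\in[\pid]$, I would exploit the equivariance $\dec\circ\pi=\pi\circ\dec$ that characterises elements of $[\pid]$ (derived inside the proof of Lemma~\ref{lem:inv}) to rewrite, for $i\in\{1,2\}$,
\[
\adec(y;\sigma_i)=\sigma^{-1}\pi_i^{-1}\dec(\pi_i\sigma(y))=\sigma^{-1}\dec(\sigma(y)),
\]
which is independent of $i$, so $\sigma_1\sim\sigma_2$.

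Next I would establish the converse: each equivalence class lies inside a single right coset. Starting from $\pi_1\sim\pi_2$, i.e., $\pi_1^{-1}\dec(\pi_1(y))=\pi_2^{-1}\dec(\pi_2(y))$ for all $y\in\realnumber^N$, I would substitute $y\leftarrow\pi_1^{-1}(z)$ and multiply through by $\pi_2$ to obtain $\pi_2\pi_1^{-1}\dec(z)=\dec(\pi_2\pi_1^{-1}(z))$. This is precisely the equivariance that forces $\pi_2\pi_1^{-1}\in[\pid]$, hence $\pi_2\in[\pid]\pi_1$, placing $\pi_1$ and $\pi_2$ in the same right coset.

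With the equivalence classes coinciding with the right cosets of $[\pid]$, Lagrange's theorem delivers both claims simultaneously: every coset has cardinality $|[\pid]|$, so every equivalence class does too, and the number of classes is $|\mathcal{A}|/|[\pid]|$. The main obstacle is the converse inclusion, since the author's sketch only shows that each coset lies in a single class, which by itself would only bound $|\EC|$ from above; the substitution argument promoting $\pi_2\pi_1^{-1}$ to an equivariant (hence absorbed) automorphism is therefore the essential step that pins down equality of the two partitions rather than mere refinement.
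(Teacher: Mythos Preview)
Your argument follows the paper's route—identify the decoder-equivalence classes with the right cosets of $[\pid]$ and then apply Lagrange—but you carry out both inclusions, whereas the paper only verifies that each right coset $[\pid]\sigma$ is contained in a single equivalence class before invoking Lagrange. As you correctly observe, that one-sided containment alone would only yield $|\EC|\le|\mathcal{A}|/|[\pid]|$ (equivalence classes could in principle be unions of several cosets); your converse step, deducing $\pi_2\pi_1^{-1}\in[\pid]$ from $\pi_1\sim\pi_2$ via the equivariance characterization established in Lemma~\ref{lem:inv}, is exactly what closes this gap and pins the two partitions to coincide. So your proof is not merely correct but in fact more complete than the paper's own sketch.
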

It is worth noticing that for some $y$ two ECs may produce the same candidate; however, our relation permits to calculate the maximum number of different results under $\adec$, providing an upper bound of parameter $M$ of an AE-$\dec$ decoder.

\subsection{Equivalence classes of the SC decoder}\label{subsec:EC_AESC}
In \cite{geiselhart2020automorphism}, it was shown that the group $\LTA$ is SC-absorbed i.e. $\LTA\leq[\pid]$. 
Here, we provide a larger subset for $[\pid]$.
\begin{theorem}\label{theo:EC}
	If $\BLTA(S)$ is the affine automorphism group of a polar code with $s_1>1$, then $\BLTA(2,1,\dots,1) \leq [\pid]$.
\end{theorem}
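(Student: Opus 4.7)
The plan is to reduce the statement to showing that a single additional generator beyond $\LTA$ lies in $[\pid]$, and then to establish that absorption by analyzing the SC recursion on the stages corresponding to the first block of $S$. As a first step I would invoke Lemma~\ref{lem:absorbed-group}, which gives $[\pid]\leq\mathcal{A}$, together with the known inclusion $\LTA\leq[\pid]$ of \cite{geiselhart2020automorphism}. Because the hypothesis $s_1>1$ implies $\BLTA(2,1,\ldots,1)\leq\BLTA(S)=\mathcal{A}$, it then suffices, by the subgroup property, to exhibit a generating set of $\BLTA(2,1,\ldots,1)$ modulo $\LTA$ and to verify absorption on its elements.

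A short computation in $\GL_2(\mathbb{F}_2)$ shows that $\LTA$ has index $3$ in $\BLTA(2,1,\ldots,1)$, and that any single $\sigma\in\BLTA(2,1,\ldots,1)\setminus\LTA$ together with $\LTA$ generates the whole $\BLTA(2,1,\ldots,1)$: the group they generate has order a multiple of $2$ that is strictly larger than $2$ and that divides $6$, hence equal to $6$. A convenient choice is to take $\sigma$ to be the affine automorphism whose transformation matrix is the swap of the first two coordinates of the index (block-diagonal with a $2\times 2$ swap in the top-left and identity elsewhere) and with $b=0$; the hypothesis $s_1>1$ guarantees that $\sigma\in\mathcal{A}$, so it is a genuine code automorphism.

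It then remains to prove $\sigma\in[\pid]$, which by \eqref{eq:adec} is equivalent to the equivariance $\scdec(\sigma(y))=\sigma(\scdec(y))$ for every channel output $y$. I would derive this by unwinding the recursive SC algorithm on the factor-graph stages corresponding to the first two coordinates of the index, on which $\sigma$ acts as a specific transposition. Two facts have to be checked on every such sub-block: (i) the LLR updates through the relevant $T_2$ kernel stages are exchanged consistently with $\sigma$, and (ii) by the decreasing-monomial property recalled in Section~\ref{subsec:monomial} together with $s_1>1$, the frozen/information pattern on every such sub-block is invariant under $\sigma$, so the resulting hard decisions and their propagation into the subsequent SC steps agree on the two sides of $\sigma$. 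Combining $\sigma\in[\pid]$ with $\LTA\leq[\pid]$ and re-applying Lemma~\ref{lem:absorbed-group} finally gives $\BLTA(2,1,\ldots,1)=\langle\LTA,\sigma\rangle\leq[\pid]$.

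The main obstacle is precisely the equivariance step. The $\LTA$-absorption proof of \cite{geiselhart2020automorphism} critically exploits that lower-triangular permutations respect the sequential SC schedule, so a genuinely new argument is needed to accommodate a swap $\sigma$ that is not lower-triangular. I expect the hypothesis $s_1>1$ to enter exactly here, since without the induced symmetry of $\mathcal{I}$ on each affected sub-block the SC outputs on $y$ and on $\sigma(y)$ need not coincide.
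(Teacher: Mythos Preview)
Your proposal is correct and follows essentially the same route as the paper: reduce to the action on the innermost length-$4$ sub-blocks and verify SC-equivariance there, with the hypothesis $s_1>1$ serving precisely to exclude the one problematic local frozen pattern. The paper realizes your steps (i)--(ii) concretely as a case analysis over the admissible local patterns \textsf{FFFF}, \textsf{FFFI}, \textsf{FIII}, \textsf{IIII} (rate-$0$, repetition, SPC, rate-$1$ nodes, each decoded permutation-equivariantly by SC), noting that $s_1>1$ rules out \textsf{FFII}; your reduction to a single generator $\sigma$ via Lemma~\ref{lem:absorbed-group} is a slightly tidier framing than the paper's informal ``set the $(1,2)$ entry of $L$ to $1$'' description, but the substance is the same.
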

\begin{proof}
Given that $\LTA\subset\BLTA(2,1,\dots,1)$, we need to understand what happens to an automorphism $L\in\LTA$ if its entry in the first row and second column is set to 1. 
This modification $L'$ represents an extra scrambling of the first 4 entries of the codeword, which is repeated identically for every subsequent block of 4 entries of the vector. 
In practice, the difference between the SC decoding of two codewords permuted according to $L$ and $L'$ is that LLRs of the leftmost $4\times 4$ decoding block are identical but permuted. 
Then, in order to estimate the candidate codeword calculated by each SC decoder, we need to track the decoding of the $4\times 4$ block when the input LLRs are permuted. 
If the polar code follows the UPO, then each block of 4 entries of the input vector can be described as one of five sequences of frozen (F) and information (I) bits, listed in increasing rate order:
\begin{itemize}
	\item \textbf{[FFFF]}: this represents a rate-zero node, and returns a string of four zeroes no matter the input LLRs; this is independent of the permutation. 
	\item \textbf{[FFFI]}: this represents a repetition node, and returns a string of four identical bits given by the sign of the sum of the LLRs; this is independent of the permutation. 
	\item \textbf{[FFII]}: this case is not possible if $s_1>1$. 
	\item \textbf{[FIII]}: this represents a single parity check node, and returns the bit representing the sign of each LLR while the smallest LLR may be flipped if the resulting vector has even Hamming weight; permuting them back gives the same result for the two decoders. 
	\item \textbf{[IIII]}: this represents a rate-one node, and returns the bit representing the sign of each LLR; permuting them back gives the same result for the two decoders. 
\end{itemize}
As a consequence, applying such transformation matrix does not change the result of the $4\times4$ block, and thus, using the same reasoning as in \cite{geiselhart2020automorphism}, of the SC decoder.
\end{proof}
This proves that particular frozen patterns are invariant under SC decoding, and their prevalence suggests the presence of a larger absorption group. 
As an example, the $(32,23)$ polar code defined by $\mathcal{I}_{min}=\{7,9\}$ has the $\BLTA(3,2)$ affine automorphism group and $[\pid]=\BLTA(3,1,1)$; we conjecture that $[\pid]$ is always a $\BLTA$ for automorphism SC (aSC) decoding. 
However, since these cases are quite rare and do not seem to provide good polar codes, we focus our discussion on the case $[\pid]=\BLTA(2,1,\ldots,1)$. 

\begin{lemma}\label{lem:sizeBLTA}
The number of permutations of $\BLTA(S)$, with $S=(s_1,\dots,s_t)$ and $\sum_{i=1}^t s_i = n$, is: 
\begin{align}\label{eq:sizeBLTA}
	|\BLTA(S)| &= |\mathcal{L}|\cdot 2^n \cdot \prod_{i=1}^t \left( \prod_{j=2}^{s_i} \left( 2^j-1 \right) \right) \\
	 &= 2^{\frac{n(n+1)}{2}}\cdot \prod_{i=1}^t \left( \prod_{j=2}^{s_i} \left( 2^j-1 \right) \right).
\end{align}
\ifdefined\PROOFS
	\begin{proof}
	It is well known that $|\GL(m)| = \prod_{i=0}^{m-1} \left( 2^m-2^i \right)$. 
From this, we have that
	\begin{align*}
	|\GL(m)| & = \prod_{i=0}^{m-1} \left( 2^m-2^i \right) = \\
	& = \prod_{i=0}^{m-1} 2^i \left( 2^{m-i}-1 \right) = \\
	& = \left( \prod_{j=0}^{m-1} 2^j \right) \cdot \left( \prod_{i=0}^{m-1} \left( 2^{m-i}-1 \right) \right) = \\
	& = 2^{\sum_{j=0}^{m-1} j} \cdot \prod_{i=1}^{m} \left( 2^i-1  \right) = \\
	& = 2^{\frac{m(m-1)}{2}} \cdot \prod_{i=2}^{m} \left( 2^i-1  \right).
	\end{align*}
	It is worth noting that we rewrote the size of $\GL(m)$ as the product of the number of lower-triangular matrices $|\mathcal{L}|$ and the product of the first $n$ powers of two, diminished by one. 
	This property can be used to simplify the calculation of $\BLTA(S)$.  
	In fact, each block of the $\BLTA$ structure forms an independent $\GA(s_i)$ space, having size $|\GA(s_i)|$. 
	All the entries above the block diagonal are set to zero, so they are not taken into account in the size calculation, while the entries below the block diagonal are free, and can take any binary value. 
	Then, the size of $\BLTA(S)$ can be calculated as $|\LTA(n)| = 2^{\frac{n(n+1)}{2}}$ multiplied by the product of the first $s_i$ powers of two, diminished by one, for each size block $s_i$, which concludes the proof. 
	\end{proof}
\fi
\end{lemma}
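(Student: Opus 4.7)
The plan is to partition the parameters of a BLTA transformation $z\mapsto Az+b$ into three algebraically independent components and count each separately: the translation vector $b$, the diagonal blocks of $A$, and the entries of $A$ strictly below the block diagonal. Independence rests on the fact that a block-triangular matrix is invertible if and only if each of its diagonal blocks is invertible, so the only constraint on $A$ is that the $i$-th diagonal block lies in $\GL(s_i)$, while the entries strictly below the block diagonal are unconstrained and those strictly above are forced to zero.

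With this decomposition, I would write $|\BLTA(S)| = 2^n \cdot \prod_{i=1}^t |\GL(s_i)| \cdot 2^{\binom{n}{2}-\sum_{i=1}^t \binom{s_i}{2}}$, where the first factor counts choices of $b$, the second the diagonal blocks, and the third the free positions strictly below the block diagonal (i.e., the $\binom{n}{2}$ strictly-lower positions minus the $\sum_i \binom{s_i}{2}$ ones that lie inside the diagonal blocks and are thus already captured by the $\GL(s_i)$ factors).

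Next I would invoke the identity $|\GL(m)| = 2^{m(m-1)/2}\cdot \prod_{j=2}^{m}(2^j-1)$ that the hint already establishes. The $2^{s_i(s_i-1)/2}$ factor extracted from each $|\GL(s_i)|$ combines with the $2^{\binom{n}{2}-\sum_i\binom{s_i}{2}}$ factor to yield exactly $2^{n(n-1)/2} = |\mathcal{L}|$, since all $\binom{n}{2}$ strictly-lower positions are now accounted for with one free bit each. Multiplying by the $2^n$ from the translation gives $|\mathcal{L}|\cdot 2^n = 2^{n(n+1)/2}$, and the residual product $\prod_{i=1}^t\prod_{j=2}^{s_i}(2^j-1)$ survives unchanged, delivering both claimed expressions simultaneously.

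I do not anticipate a serious obstacle: the argument is a clean cardinality decomposition followed by routine algebra. The only point requiring care is the exponent bookkeeping, namely verifying that $\sum_i \binom{s_i}{2}$ cancels exactly when the below-block count is combined with the $2^{s_i(s_i-1)/2}$ contributions pulled out of each $|\GL(s_i)|$. A minor conceptual point worth stating explicitly is the block-triangular determinant formula, which is what justifies treating the diagonal blocks and the below-block entries as independent in the counting.
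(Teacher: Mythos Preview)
Your proposal is correct and follows essentially the same approach as the paper: decompose $(A,b)$ into the translation $b$, the independent $\GL(s_i)$ diagonal blocks, and the free entries strictly below the block diagonal, then use the factorization $|\GL(m)|=2^{\binom{m}{2}}\prod_{j=2}^{m}(2^{j}-1)$ to collapse the powers of two into $2^{n(n+1)/2}=|\mathcal{L}|\cdot 2^{n}$. Your write-up is slightly more explicit about the exponent bookkeeping (the $\binom{n}{2}-\sum_i\binom{s_i}{2}$ count and the block-triangular invertibility criterion), but the argument is the same.
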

\begin{lemma}\label{cor:size_EC_SC}
A polar code of length $N=2^n$ with automorphism group $\BLTA(S)$, $S=(s_1>1,\dots,s_t)$ has 
\begin{equation}\label{eq:nb_ec_sc}
|\EC_{SC}|=\frac{|\BLTA(S)|}{|\BLTA(2,1,\dots,1)|} = \frac{1}{3}\prod_{i=1}^t \left( \prod_{j=2}^{s_i} \left( 2^j-1 \right) \right)
\end{equation}
equivalence classes under aSC decoding.
\ifdefined\PROOFS
	\begin{proof}
	This follows from the application of Theorem~\ref{theo:EC} and Lemma~\ref{lem:sizeBLTA}.
	\end{proof}
\fi
\end{lemma}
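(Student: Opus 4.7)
The plan is to combine three results already established in this section: Theorem~\ref{theo:EC} identifies a concrete subgroup of the absorbed set $[\pid]$, Lemma~\ref{lem:nb_ec} equates $|\EC_{SC}|$ with the index $[\mathcal{A}:[\pid]]$, and Lemma~\ref{lem:sizeBLTA} gives a closed-form size of any $\BLTA$. Under the working assumption $[\pid]=\BLTA(2,1,\ldots,1)$ adopted just above the statement, the first equality of the lemma follows immediately from Lemma~\ref{lem:nb_ec} with $\mathcal{A}=\BLTA(S)$.

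The remaining task is to simplify the ratio $|\BLTA(S)|/|\BLTA(2,1,\ldots,1)|$. First I would substitute the formula of Lemma~\ref{lem:sizeBLTA} in numerator and denominator. The common factor $2^{n(n+1)/2}$ cancels. In the denominator the block structure $(2,1,\ldots,1)$ has only one block of size larger than one, so the double product collapses to the single term $2^2-1=3$. In the numerator the double product $\prod_{i=1}^{t}\prod_{j=2}^{s_i}(2^j-1)$ is left unchanged. Taking the quotient yields the claimed $\frac{1}{3}\prod_{i=1}^{t}\prod_{j=2}^{s_i}(2^j-1)$.

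There is no serious obstacle in the argument itself, since every ingredient has already been proved; the proof is essentially an invocation of Lemma~\ref{lem:nb_ec} followed by arithmetic with Lemma~\ref{lem:sizeBLTA}. The only subtlety is that Theorem~\ref{theo:EC} provides the containment $\BLTA(2,1,\ldots,1)\leq[\pid]$ rather than equality; the equality is flagged as a working hypothesis in the discussion preceding the statement. Were one to drop that hypothesis, the present expression would become an upper bound on $|\EC_{SC}|$, and proving the reverse inclusion would require exhibiting a received vector $y$ for which two automorphisms outside $\BLTA(2,1,\ldots,1)$ yield distinct aSC outputs — a separate question beyond the scope of this lemma.
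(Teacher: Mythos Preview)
Your proposal is correct and follows the same approach as the paper's proof, which simply cites Theorem~\ref{theo:EC} and Lemma~\ref{lem:sizeBLTA}; you have merely made the implicit use of Lemma~\ref{lem:nb_ec} and the arithmetic cancellation explicit. Your remark about the working hypothesis $[\pid]=\BLTA(2,1,\ldots,1)$ is also exactly how the paper handles it in the discussion preceding the lemma.
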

%

\subsection{Generation of equivalence class representatives}\label{subsec:gen_EC}
A \emph{representative} $\pi$ of an EC is an element of the class; an EC can be represented through its representative as $[\pi]$. 
In this section, we describe how to find a representative for each EC under aSC decoding.
To begin with, we define the PUL decomposition of an affine transformation: the transformations $P \in \PL$, $U \in \UTL$ and $(L,b_0) \in \LTA$ are called the PUL decomposition of $(A,b)$ in \eqref{eq:GA} if
\begin{equation*}
	A v + b = P \cdot U \cdot (L v + b_0 ) ,
\end{equation*}
i.e., if $A = P U L$ and $b = P U b_0$.
As remarked in \cite{geiselhart2020automorphism}, the PUL decomposition of $(A,b)$ corresponds to the concatenation of the permutations as
\begin{equation}\label{eq:PUL_perm}
	\pi_{(A,b)} = \pi_{L,b_0} \circ \pi_U \circ \pi_P  ,
\end{equation}
applied from right to left. 
Since component $(L,b_0) \in \LTA$ is absorbed by SC, we can focus on the other components $P$ and $U$. 
If we call $\mathcal{A}_{\mathcal{P}}$ and $\mathcal{A}_{\mathcal{U}}$ the subsets of $\mathcal{A}$ containing only UTL transformations and permutation matrices, we can always find an EC representative composing elements from these two sets. 
Figure~\ref{fig:BLTA_EC} shows an example of pattern $A=PU$ with $P\in\mathcal{A}_{\mathcal{P}}$ and $ U\in\mathcal{A}_{\mathcal{U}}$.
\begin{theorem}\label{theo:PU_BLT}
Each EC contains at least one automorphism $P \cdot U$ with $P\in\mathcal{A}_{\mathcal{P}}$ and $U\in\mathcal{A}_{\mathcal{U}}$.
\begin{proof}
Given a representative $A$ for an equivalence class, it is possible to decompose this matrix as $A = P U L$. 
Since $L \in \LTA \leq [\pid]$, then $L^{-1} \in [\pid]$ and hence $A \cdot L^{-1} = PU$ belongs to the same EC as $A$, i.e., $[PU] = [A]$.
\end{proof}
\end{theorem}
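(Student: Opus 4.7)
My plan is to combine the PUL decomposition of affine transformations recalled just above the theorem with two facts already established in the preceding subsections: the absorption $\LTA \leq [\pid]$ from \cite{geiselhart2020automorphism}, and the right-coset interpretation of equivalence classes used in the proof of Lemma~\ref{lem:nb_ec}.

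First I would take an arbitrary representative $A$ of the given equivalence class and write $A = PUL$ with $P \in \mathcal{P}$, $U \in \mathcal{U}$, and $L \in \LTA$, with the affine offset absorbed into the $L$ factor via $b = PUb_0$. Since $L \in \LTA \subseteq [\pid]$, Lemma~\ref{lem:inv} gives $L^{-1} \in [\pid]$, and by Lemma~\ref{lem:absorbed-group} the absorbed set $[\pid]$ is a subgroup of $\mathcal{A}$. The right-coset viewpoint exploited in the proof of Lemma~\ref{lem:nb_ec} then forces $[A] = [A \cdot L^{-1}] = [PU]$, so $PU$ is the sought EC representative. Observe that $PU \in \mathcal{A}$ automatically, since $\mathcal{A}$ is a group containing both $A$ and $L^{-1}$, so there is no issue in even speaking about the equivalence class of $PU$.

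The only delicate step, which I expect to be the main obstacle in a fully rigorous write-up, is verifying that the PUL factorization of $A$ can actually be carried out \emph{inside} $\BLTA(S)$ in such a way that the individual factors $P$ and $U$ land in $\mathcal{A}_\mathcal{P}$ and $\mathcal{A}_\mathcal{U}$ respectively. This amounts to an $\mathbb{F}_2$ variant of LU-decomposition with pivoting performed block by block along the diagonal blocks of $A$, so that $P$ is a block-respecting permutation matrix and $U$ is upper-triangular compatibly with the block profile $S$, both of them being therefore automorphisms of the code. Once this linear-algebraic point is granted, the remainder of the argument is a one-line application of the coset/subgroup machinery already developed in Section~\ref{subsec:EC}.
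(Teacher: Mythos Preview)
Your approach is essentially identical to the paper's: take a representative, perform the PUL decomposition, and use $L\in\LTA\leq[\pid]$ together with Lemma~\ref{lem:inv} to conclude that $PU$ lies in the same equivalence class. The paper's own proof is in fact terser than yours and simply asserts the decomposition and the coset step without further comment.

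The point you single out as ``delicate'' --- that the factors $P$ and $U$ obtained from the PUL decomposition of a $\BLTA(S)$ matrix actually land in $\mathcal{A}_\mathcal{P}$ and $\mathcal{A}_\mathcal{U}$, i.e., respect the block profile $S$ --- is indeed glossed over in the paper's proof as well. Your sketch (perform LU with partial pivoting block by block along the diagonal) is the right idea and would be needed for a fully rigorous treatment; the paper does not supply it either.
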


\begin{figure}[t]
\centering
\begin{subfigure}{0.45\columnwidth}
  \centering
  \includegraphics[width=0.55\columnwidth]{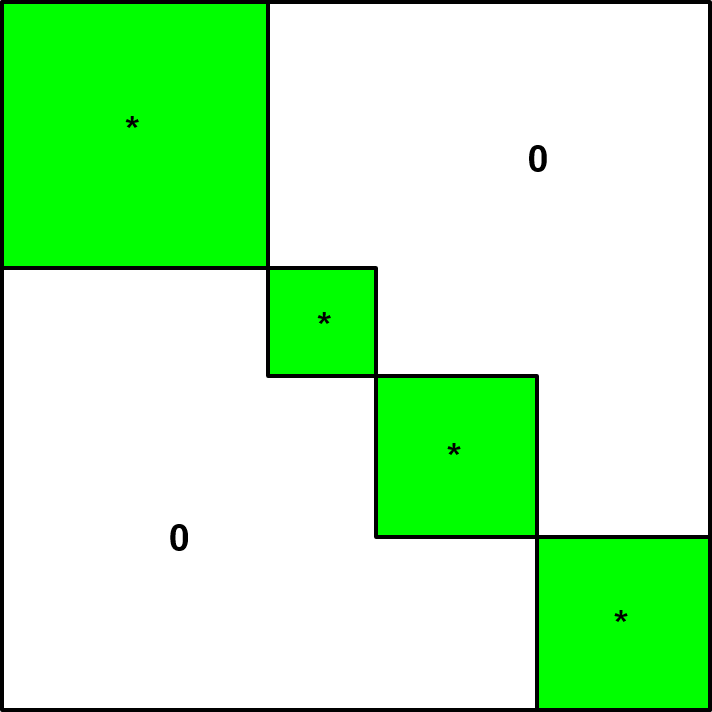}
  \caption{}
\label{fig:BLTA_EC}
\end{subfigure}
\begin{subfigure}{0.45\columnwidth}
  \centering
  \includegraphics[width=0.95\columnwidth]{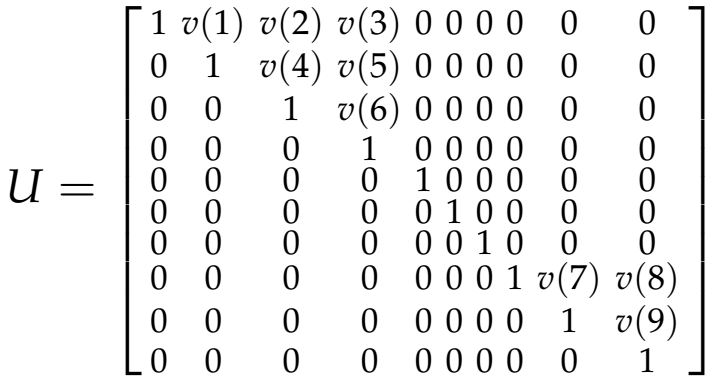}
  \caption{}
  \label{fig:BLT_with_u}
\end{subfigure}%
\caption{(a) BLT fitting every EC (b) bit order of $v$ in $U\in\mathcal{A}_{\mathcal{U}}$ with $S=(4,1,1,1,3)$.}
\end{figure}

\subsection{EC selection for AE decoders}
%
As discussed before, two EC may produce the same candidate codeword for particular values of $y$. 
Hence a method to select which EC to include in an AE decoder is of capital importance, especially for small values of $M$. 
Here we propose an heuristic approach for this selection based on the concept of distance among $\UTL$ and permutation matrices. 

For a block structure $S=(s_1,\dots,s_t)$, the number of automorphisms generated by a UTL transformation or from the permutation group are:
\begin{equation}\label{eq:aut_U_P}
	|\mathcal{A}_{\mathcal{U}}| = \prod_{i=1}^{t}2^{\frac{s_i(s_i-1)}{2}} \quad , \quad |\mathcal{A}_{\mathcal{P}}| = \prod_{i=1}^{t}(s_i!)~.
\end{equation}
We describe a permutation $P\in\mathcal{A}_{\mathcal{P}}$ with the vector $p$ of size $n$ where $p(i)=j \Leftrightarrow P(j,i)=1$. 
Similarly, every possible  $U\in\mathcal{A}_{\mathcal{U}}$ is described by a binary vector $v$ of size $m$ where $m=\text{log}_2(|\Aut(\mathcal{U}|)$; 
the bit order of $v$ follows the natural order of rows and then columns as shown in Figure~\ref{fig:BLT_with_u}.
Next, we define the Hamming distance (HD) between two vectors $a,a'$ of length $l$ as:
\begin{equation} \label{eq:HD_u_p}	
 HD\left(a, a' \right) = l-\sum_{i=1}^{l} \delta_{a(i),a'(i)}
\end{equation}
where $\delta_{a(i),a'(i)}$ is the Kronecker delta function being 1 if $a(i) = a'(i)$ and 0 otherwise. 

Our heuristic selection is based on a pair of Hamming distances thresholds $D=(d_{U},d_{P})$, representing the minimum distances between two EC representatives in the $M$ automorphisms selected for AE decoding. 
Every new EC representative is constructed by randomly generating two vectors $v$ and $p$, representing an $\UTL$ and a permutation matrices respectively, and checking if the distances between the generated vectors and the vectors already in the EC representative lists are above the thresholds. 
Moreover, a further check is required to assure that the EC representative does not belong to an already calculated EC. 
This check is done by multiplying the calculated $UP$ matrix and the $UP$ matrices of the previously calculated ECs as stated in Lemma~\ref{lem:checkproductA}. 
\begin{lemma}\label{lem:checkproductA}
Given $\pi_1,\pi_2 \in \mathcal{A}$ having transformation matrices $A_1,A_2$, then $\pi_1\in[\pi_2]$ if and only if $A_1 \cdot A_2^{-1} \in\BLTA(2,1,\ldots,1)$.  
\ifdefined\PROOFS
 	\begin{proof}
 	This follows dirctly from Lemma~\ref{lem:inv} and the definitions of EC and $[\pid]$. 
 	\end{proof}
\fi
\end{lemma}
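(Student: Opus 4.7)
The plan is to combine the coset structure on $\mathcal{A}$ already established in Section~\ref{subsec:EC} with the standing assumption (stated just before the lemma) that $[\pid]=\BLTA(2,1,\ldots,1)$. Concretely, I would prove the equivalence by a short chain of rewrites, going from the defining relation $\pi_1\sim\pi_2$, to a coset membership condition, to a condition on transformation matrices.

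First, I would reduce $\pi_1\in[\pi_2]$ to a coset condition. The proof of Lemma~\ref{lem:nb_ec} already shows that the equivalence classes of $\sim$ coincide with the right cosets of $[\pid]$ in $\mathcal{A}$: indeed, any two elements of $[\pid]\sigma$ produce the same output under $\adec$. Therefore $\pi_1\sim\pi_2$ if and only if $\pi_1\in[\pid]\circ\pi_2$, i.e., $\pi_1\circ\pi_2^{-1}\in[\pid]$. Lemma~\ref{lem:inv} together with Lemma~\ref{lem:absorbed-group} guarantees that $[\pid]$ is closed under inversion and composition, so the characterization does not depend on which side we place the inverse.

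Second, I would translate the permutation-level statement to the matrix level using the affine parametrization \eqref{eq:GA}. If $\pi_1,\pi_2$ correspond to the affine maps $z\mapsto A_1 z+b_1$ and $z\mapsto A_2 z+b_2$, then $\pi_1\circ\pi_2^{-1}$ corresponds to an affine map whose linear part is $A_1 A_2^{-1}$ (the translation part contributes a vector $b'$, but this does not enter the block-triangularity criterion). Since $[\pid]=\BLTA(2,1,\ldots,1)$, membership of $\pi_1\circ\pi_2^{-1}$ in $[\pid]$ is equivalent to $A_1 A_2^{-1}\in\BLTA(2,1,\ldots,1)$, which closes the argument in both directions.

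The only mildly delicate point is keeping track of the composition order, since the paper views permutations of codeword coordinates as being applied right to left (see \eqref{eq:PUL_perm}); the proof of Lemma~\ref{lem:nb_ec} uses right cosets $[\pid]\sigma$, which is precisely the convention that yields the product $A_1 A_2^{-1}$ rather than $A_2^{-1} A_1$. I expect this bookkeeping to be the only obstacle; once the convention is fixed, no further computation is needed, which is consistent with the hint that the proof follows directly from Lemma~\ref{lem:inv} and the definitions of EC and $[\pid]$.
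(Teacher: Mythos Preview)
Your proposal is correct and follows essentially the same approach as the paper, whose proof is a single sentence pointing to Lemma~\ref{lem:inv} and the definitions of EC and $[\pid]$; you simply spell out in detail what that sentence means, invoking the right-coset description from the proof of Lemma~\ref{lem:nb_ec} and the standing assumption $[\pid]=\BLTA(2,1,\ldots,1)$. Your explicit flagging of the composition-order bookkeeping is a useful addition, since the paper does not comment on it.
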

If either one of the two checks fails, the vectors are discarded. 
The process is repeated until the desired number of automorphisms $M$ is reached.

\section{Automorphism-friendly design of polar codes}\label{sec:auto_design}
In this section, we propose a new polar code design conceived to match a desired affine automorphism group. 
This method is an evolution of what we proposed in \cite{PC_UTL_design}, and permits to overcome its main bottlenecks given by the need of human inspection and the lack of the decreasing monomial property of the resulting code. 
This new design can be completely automatized, and the decreasing monomial property is kept. 
The authors in \cite{geiselhart2021automorphismPC} propose to generate polar codes with a good affine automorphism group by starting from a minimal information set $\mathcal{I}_{min}$; however, this method does not permit to select the code parameters $N$ and $K$ in advance. 

The proposed design method requires 4 inputs, namely the code length $N$, the desired dimension $K$, the desired block structure $S$, and a starting design SNR $SNR_{min}$. 
The algorithm successfully stops when an $(N,K)$ polar code with the desired block structure $S$ is retrieved, while it fails if such a code cannot be designed. 
Its pseudo-code is shown in Algorithm~\ref{alg:design}. 
The proposed method uses an auxiliary matrix $A_\mathcal{G}$ of size $n \times n$ representing the monomials needed to be added to $\mathcal{G}$ in order to free a certain position $(i,j)$ in the affine transformation matrix $A$. 
the method to generate such a matrix from a given $\mathcal{I}$ is described in \cite{PC_UTL_design}. 
Algorithm~\ref{alg:design} runs two encapsulated loops: the external one is based on the generation of a reliability sequence $\mathcal{R}$ given the design SNR, while the internal one increases at every step the number of virtual channels to be inserted in $\mathcal{I}$ on the basis of automorphism properties instead of reliability, namely by picking monomials listed in entries of $A_\mathcal{G}$. 

\begin{algorithm}[t!]
\newcommand\mycommfont[1]{\footnotesize\ttfamily\textcolor{blue}{#1}}
\SetCommentSty{mycommfont}
		\SetKwInOut{Input}{input}%
		\SetKwInOut{Output}{output}%
		\SetKw{Return}{return}
		\SetAlgoLined
		\Input{Length $N$, dimension $K$, block structure $S=(s_1,\dots, s_t)$, design SNR $SNR_{min}$}
		\Output{Generating monomial set $\mathcal{G}$}
		\For{$SNR =SNR_{min}:\Delta_{SNR}:SNR_{max}$}{
		$\mathcal{R}  \leftarrow \text{DE/GA}\left(N,SNR\right)$\tcp*{Reliability list}	\label{alg:rel_sequence}
		$K_s\leftarrow K-1$\tcp*{\scriptsize Pre-design dimension}		
		\While{1}{
		$\mathcal{G}\leftarrow \mathcal{R}(1:K_s)$\;
		$d_{max} \leftarrow \text{MaxDegree}(\mathcal{G})$\;
		\If{$\sum_{k=0}^{d_{max}}{n\choose k}<K$}{
			break\;\label{alg:ET_alg}
		}
		\For{$i=1:1:t$}{\tcp{Free block by block}
			$col\leftarrow \sum_{b={1}}^{i-1}s_b +1$ \tcp*{$1^{st}$ column of $i^{th}$ block}
			\For{$C=col:1:col+s_i-1$}{
				$\mathcal{G}_a \leftarrow []$\tcp*{New monomials to add}
				\For{$dim = 1:1:s_i$}{
                        $\mathcal{G}_a\leftarrow \mathcal{G}_a \cup A_{\mathcal{G}}(col+dim-1,C)$\;
                 }        
                 $\mathcal{G} \leftarrow \mathcal{G}\cup \text{unique}(\mathcal{G}_a)$\;
                 $A_{\mathcal{G}} \leftarrow \text{ComputeAm}(\mathcal{G})$ \tcp*{$C^{th}$ column of $i^{th}$ block is free}\label{alg:check_column}
			}
			
			\If{$|\mathcal{G}|>K$}{
					break\;
			}	
		}
		\If{$|\mathcal{G}|=K$}{
			\Return{$\mathcal{G}$}
		}
		\Else{
					$K_s\leftarrow K_s-1$\;
		}
		}
		}
		\Return{Design failure\tcp*{code not achievable}}
	\caption{Proposed design}\label{alg:design}
\end{algorithm}

To begin with, a reliability sequence $\mathcal{R}$ for virtual channels is generated on the basis of $SNR_{min}$ using e.g. the DE/GA method \cite{frozenset}. 
Next, the $K_s=K-1$ most reliable monomials are used to create the initial monomial set $\mathcal{G}$ of the code. 
Then, the auxiliary matrix $A_\mathcal{G}$ is generated as described in \cite{PC_UTL_design}, where $A_\mathcal{G}(i,j)$ stores the list of monomials to be added to free position $(i,j)$. 
Then, our method starts adding monomials listed in $A_\mathcal{G}$ to match the desired first block dimension $s_1$. 
This addition is performed column by column and updating $A_\mathcal{G}$ after each unlocked column (Line~\ref{alg:check_column}).  
After the inclusion of all the monomials to free the first block, a check on the size of $\mathcal{G}$ is performed: if $|\mathcal{G}|\leq K$, the algorithm keeps $\mathcal{G}$ and proceeds to the next block; otherwise the procedure is restarted after decrementing $K_s$ by one. 
For the $i^{th}$ block, the procedure is repeated, namely the auxiliary matrix $A_\mathcal{G}$ is calculated on the basis of the monomial set $\mathcal{G}$ calculated for the previous block, all the monomials corresponding to $i^{th}$ block are added to $\mathcal{G}$ column by column and its size is checked: if $|\mathcal{G}|\leq K$, the algorithm proceeds to next block, otherwise the procedure restarts from the first block with $K_s=K_s-1$. 
The algorithm ends successfully if $|\mathcal{G}|=K$ at the check of the last block of size $s_t$, and the information set corresponding to the monomial set $\mathcal{G}$ is provided as output by the algorithm. 
If the design is not successful, a new attempt is started in the outer loop, now with the reliability sequence for an increased design SNR.
If the design SNR exceeds a given threshold $SNR_{max}$, the algorithm terminates with failure. 
 
The running time of the proposed algorithm can be reduced by early stopping the $K_s$ decrease by checking if the number of monomials to be added for the next blocks is too large. 
In fact, $A_\mathcal{G}$ can only provide monomials to pick with degrees that are lower or equal to the maximum degree already located in the generating monomial set. 
Since the number of monomials up to a degree $d$ is given by $\sum_{k=0}^{d}{n\choose k}$, if we call $d_{max}$ the largest degree present in $\mathcal{G}$, then $A_\mathcal{G}$ can provide enough monomials to reach the desired dimension $K$ if and only if $\sum_{k=0}^{d_{max}}{n\choose k}<K$. 
This check is performed in Line~\ref{alg:ET_alg}.

\section{Simulation Results}\label{sec:num}
In this section we present simulation results of polar codes transmitted with BPSK modulation over the AWGN channel. 
We show polar codes designed with Algorithm~\ref{alg:design}, in blue in the figures, under AE-$M$-decoding, where $M$ represents the number of parallels polar decoders in the AE, and their ML bound (determined by SCL decoding with list size $L=512$), comparing the results with state-of-the-art polar codes designed for AE-SC decoding and 5G polar codes \cite{5GHuawei} under CRC-aided SCL decoding with list size $L$.
\begin{figure}[t]
	\includegraphics[width=0.995\columnwidth]{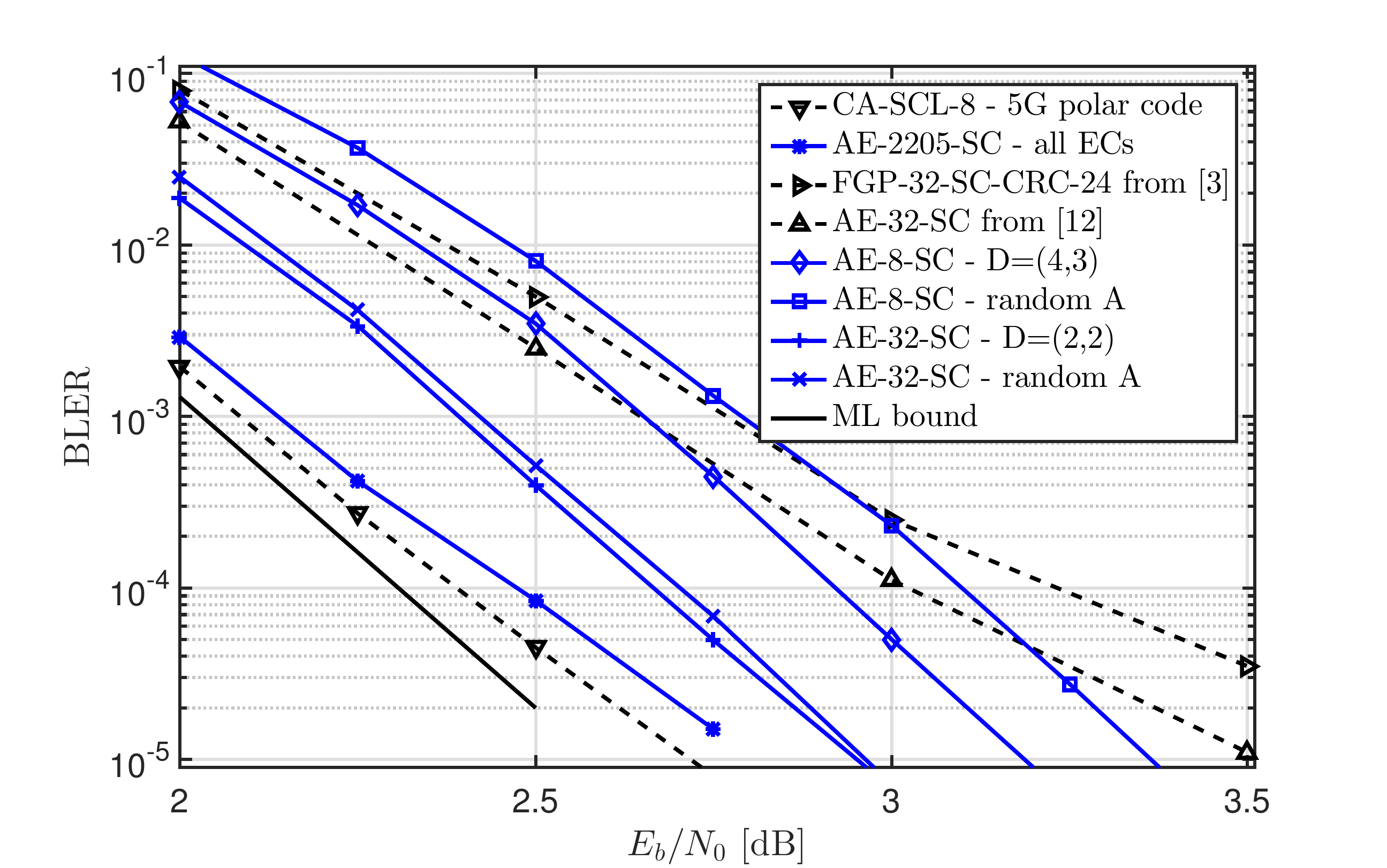}
	\caption{AE performance of $(1024,512)$ with $S=(4,1,1,1,3)$.}
	\label{fig:1024_512}
\end{figure}

Figure~\ref{fig:1024_512} compares the performance of $(1024,512)$ polar codes designed with Algorithm~\ref{alg:design} for $S=(4,1,1,1,3)$ to other permutation-based designs provided in \cite{PC_UTL_design,PermGross}, that may not be compliant with the UPO framework. 
Our proposal outperforms the other designs for AE-decoding, reducing the gap to the 5G polar codes decoded with CA-SCL. 
Moreover, the use of one automorphism from each of the $|EC|=2205$ equivalence classes permits to approach the ML bound for this code. 
Finally, the proposed method for EC selection works well for small values of $M$, while its contribution is reduced for larger values of $M$.

\begin{figure}[t!]
	\includegraphics[width=0.995\columnwidth]{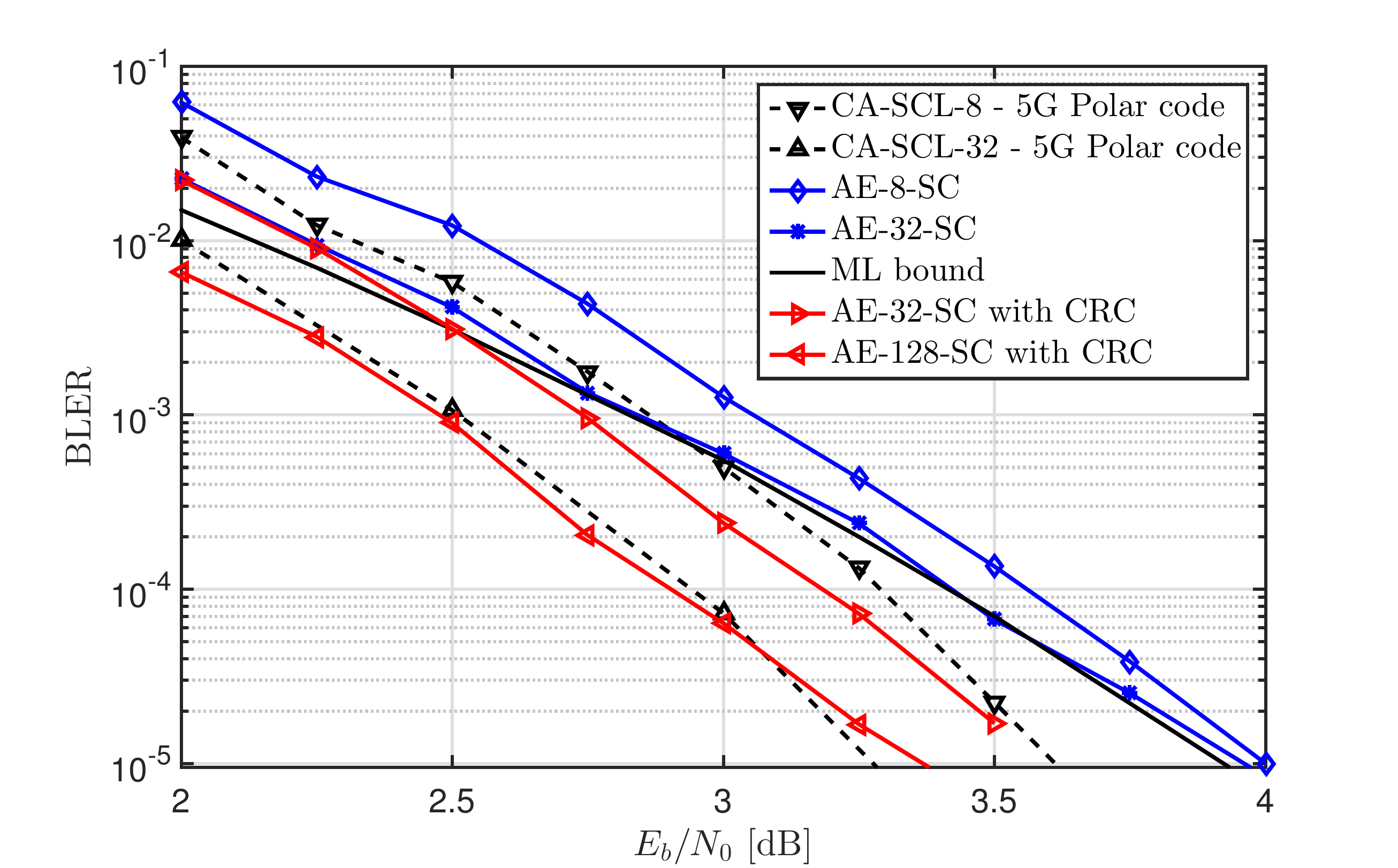}
	\caption{AE performance of $(256,128)$ code with $S=(3,5)$.}
	\label{fig:256_128}
\end{figure}
Figure~\ref{fig:256_128} shows the performance evaluation of $(256,128)$ polar codes with $S=(3,5)$; this code was studied in \cite{geiselhart2021automorphismPC,sym1} and our design allows to retrieve it. 
AE-32-SC with $D=(4,3)$ permits to reach ML performance, however the results are still far away from 5G polar codes. 
A more accurate analysis of AE-SC decoding results show that correct codewords are sometimes generated but discarded due to the least-square metric. 
In this case, the introduction of a CRC of 6 bits, used in 5G\cite{5GHuawei}, permits to largely improve the performance. 
However, it is not clear when the CRC is useful: in fact, its introduction for the code presented in Figure~\ref{fig:1024_512} does not improve the AE-SC decoding performance. 
We conjecture this performance gain to be connected to the number of equivalence classes, which in this case and for the code depicted in Figure~\ref{fig:256_128_53} is quite large, namely $|EC|=68355$. 
 
Finally, Figure~\ref{fig:256_128_53} shows the performance of another $(256,128)$ polar code with $S=(5,3)$, known for having poor performance under AE-SC decoding\cite{geiselhart2021automorphismPC}. 
This result is confirmed by our simulations. 
The approach proposed in \cite{Perm_Russian_polar_subcode}, introducing dynamic frozen bits and using permutations that are not code automorphisms, permits to improve the AE-SC decoding performance, approaching the ML bound. 
Here we propose another approach, based on the use of soft decoders; AE-SCAN with $5$ and AE-BP with $100$ iterations reaches ML performance for $M=32$ automorphisms selected based on the Hamming distance constraint with $D=(4,3)$, however presuming $[\pid] = \{\pid\}$. 
The introduction of the CRC of 6 bits permits to improve the performance at small BLER, beating the ML bound. 
Since both codes share the same number of equivalence classes, we conjecture as well that the symmetry of the code influences the performance gain provided by a CRC.

\section{Conclusions}\label{sec:conclusions}
In this paper, we proposed a polar code design allowing for a desired affine automorphism structure. 
We expand the SC-absorption group to $\BLTA(2,1,\dots,1)$ for most of the polar codes, while calculating the maximum number of non-redundant automorphisms under AE-SC decoding. 
We introduce the notion of equivalence relation under AE-based decoding, which represents a powerful tool for the study of AE decoding of monomial codes. 
The classification of automorphisms for other decoders remains an open problem; we conjecture that AE-SC and AE-SCL have the same equivalence class structure, while $[\pid] = \{\pid\}$ for AE-BP and AE-SCAN. 
 \begin{figure}[t!]
	\includegraphics[width=0.995\columnwidth]{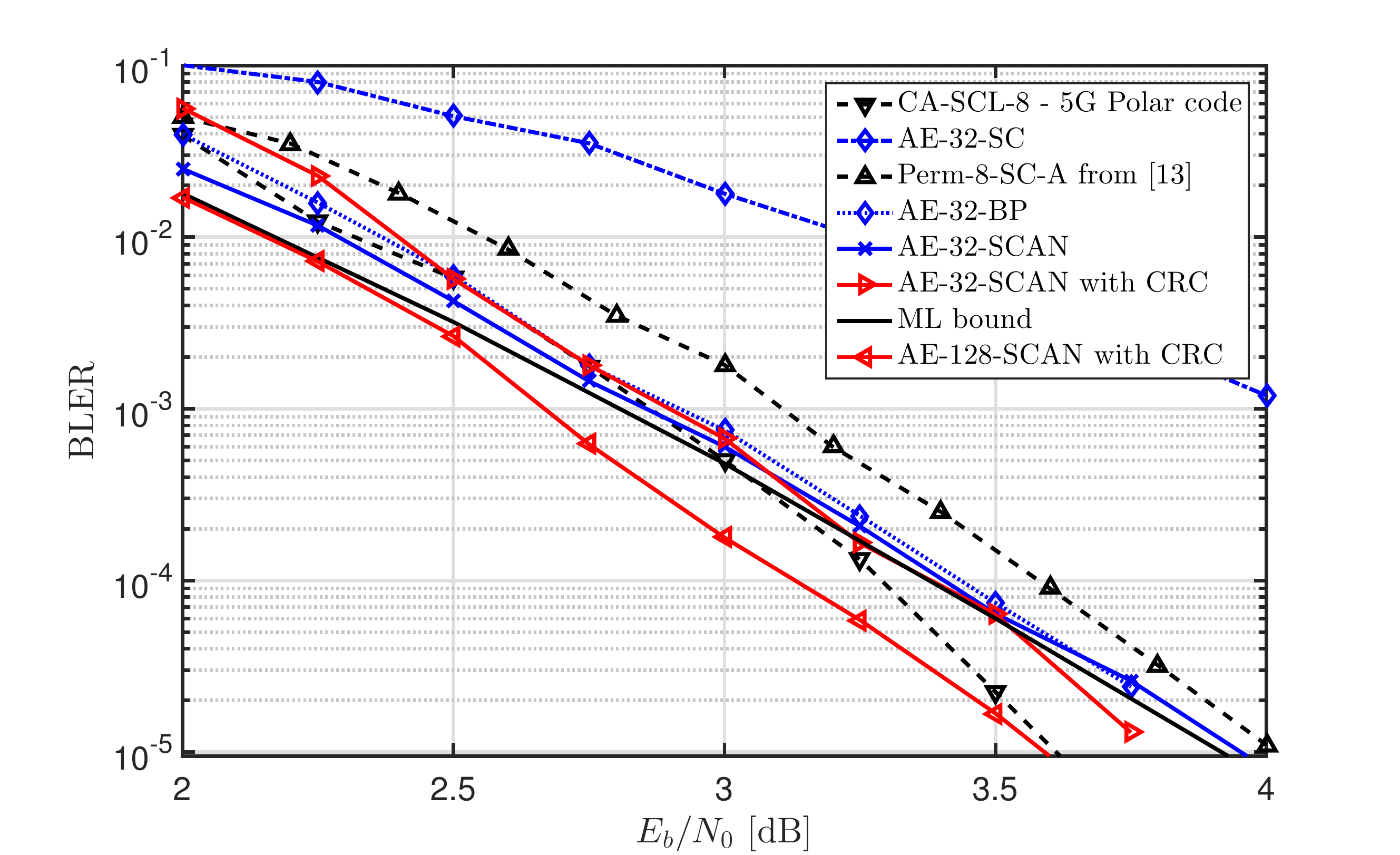}
	\caption{AE performance of $(256,128)$ code with $S=(5,3)$.}
	\label{fig:256_128_53}
\end{figure}

\bibliographystyle{IEEEbib}
\bibliography{references}

\end{document}